\newtheorem{theorem}{Theorem}
\newtheorem{proposition}[theorem]{Proposition}
\newcommand{\ud}{\mathrm{d}}
\newcommand{\be}{\begin{equation}}
\newcommand{\ee}{\end{equation}}
\newtheorem{thm}{\bf Theorem}[section]
\newcommand{\Wolb}{{\it Wolbachia} }
\begin{document}

\newgeometry{left=2.1cm, right=2.1cm, top=3.8cm, bottom=3.8cm}

\title[Dengue transmission dynamics in age-structured human populations in the presence of {\em Wolbachia}]{Dengue transmission dynamics in age-structured human populations in the presence of {\em Wolbachia}}

\author[J\'{o}zsef Z. Farkas]{J\'{o}zsef Z. Farkas}
\address{J\'{o}zsef Z. Farkas, Division of Computing Science and Mathematics, University of Stirling, Stirling, FK9 4LA, United Kingdom }
\email{jozsef.farkas@stir.ac.uk}

\author[Stephen A. Gourley]{Stephen A. Gourley}

\author[Rongsong Liu]{Rongsong Liu}
\address{Rongsong Liu, Department of Mathematics and Department of Zoology and Physiology, University of Wyoming, Laramie, WY 82071, USA}
\email{Rongsong.Liu@uwyo.edu}
\subjclass{92D30, 34D20, 34C11}
\keywords{{\it Wolbachia}, structured populations, dengue fever, $R_0$, stability.}
\date{\today}

\begin{abstract}

According to the World Health Organization the global incidence rate of dengue infections have risen drastically in recent years. It is estimated that globally the number of new infections is in the range of $100$ to $400$ million per annum. At the same time a number of recent studies reported the existence of \Wolb strains, which inhibit dengue virus replication in mosquito species that are primary vectors for dengue transmission. In this study we focus on the impact of \Wolb on dengue transmission dynamics in an age-structured human population. We introduce a mathematical model, which takes into account age-related effects, such as age-dependent human recovery and mortality rates, as well as age-dependent vector to human dengue transmission efficacy.  We deduce an explicit formula for the basic reproduction number $\mathcal{R}_0$, which allows us to assess the impact of the above mentioned age-related effects on the local asymptotic stability of the dengue free equilibrium. The formula we deduce for $\mathcal{R}_0$ also shows the intricate relationship between human demography and the presence of a dengue inhibiting \Wolb strain.

\end{abstract}
\maketitle

\section{Introduction}

Dengue is one of the most prevalent vector borne infectious diseases in tropical and sub-tropical climates. A recent study \cite{Bhatt} estimates that around $400$ million people get infected annually. While around three quarters of the cases are considered clinically asymptomatic, dengue infection can lead to severe health complications including haemorrhagic fever, which may lead to death. Although it is possible to acquire immunity to a specific strain, there are at least $4$ different virus serotypes identified to date. The spatial heterogeneity of the disease distribution across geographic areas and local outbreaks also continue to cause major disruptions. Although some countries very recently have approved the use of vaccines (see e.g. \cite{WHO}), the majority of efforts continue to focus on predicting the severity of  local outbreaks and on the control of the mosquito vector species. This is why the reproductive parasite \Wolb became a subject of great interest to the infectious disease community. 

Evolutionary biologists on the other hand have been fascinated for long by this maternally transmitted symbiont, partly as \Wolb is also one of the most common reproductive parasites in nature, see e.g. \cite{ONeill,Werren1997}. In fact according to a statistical analysis in \cite{Hilgen}, $66\%$ of all insect species are infected with one or more \Wolb strain. It is also worthwhile to note that in natural arthropod populations infection frequency within individual species tends to be very low or very high.  \Wolb often inhibits testes and ovaries of its host, and frequently it is also present in its host's eggs. It interferes with its host's reproductive mechanism in a fascinating fashion. This allows \Wolb to successfully establish itself in a large variety of arthropod species. Well-known effects of \Wolb include cytoplasmic incompatibility and feminization of genetic males, see e.g. \cite{Hoffmann1997,Telschow2005,Telschow2005b}. Another important and well-known effect of some \Wolb infections is the inducement of parthenogenesis, see e.g.  \cite{Stouthamer1997}. To incorporate all (or most) of these mechanisms in a tractable mathematical model is both interesting and challenging. 

To this end in recent decades a substantial number of mathematical modelling approaches have been developed and applied to model different types of \Wolb infections in a variety of arthropod species, including butterflies and mosquitoes. Many of the first models took the form of discrete time matrix models describing population frequencies, see e.g.  \cite{Turelli1994,Vautrin2007}. Using frequency type models a number of researchers investigated for example the possibility of coexistence of multiple \Wolb strains, each of which exhibiting different types of the well-known reproductive mechanisms associated with \Wolb, see e.g. \cite{Engelstadter2004,FHin2,Keeling2003,Vautrin2007}. These early results were particularly important, as \Wolb has been studied recently as a potential biological control to reduce the impact of mosquito born diseases. A successful introduction of a new \Wolb strain into a mosquito population requires a number of conditions to be met, one of which is that the new strain must have the ability to coexist with many of the possible existing strains.  In \cite{McMeniman2009} three key factors, namely, strong expression of cytoplasmic incompatibility, low fitness cost and high maternal transmission rate, were identified as drivers of a successful introduction of a new \Wolb strain into a wild {\em Aedes} population. In fact a number of studies have been published in recent years, which demonstrate the possibility of a successful introduction of specific \Wolb strains into a variety of mosquito species, such as {\it Aedes aegypti} and {\it Aedes albopictus}, which are vectors for mosquito borne diseases, such as dengue, yellow fever or West Nile virus, see e.g. \cite{Hancock, Hancock2, Hoffmann, Keeling2003}. Some of the earlier studies, e.g. \cite{FHin2,McMeniman2009,Rasgon2004} have primarily focused on the potential impact of the life-shortening effect of certain \Wolb strains on mosquito populations. More recent studies have investigated and demonstrated that specific \Wolb strains have the ability to inhibit or block dengue virus replication in a variety of mosquito species, see e.g. \cite{Blagrove,PengLu,Walker2011}.

Mathematical models to assess the potential impact of a dengue inhibiting \Wolb strain tend to be complex, as they must incorporate the well-known effects of \Wolb on the reproductive mechanisms of its host, as well as give a realistic description of dengue transmission between mosquitoes and humans and in turn dengue infection dynamics in the human population. Hughes and Britton in \cite{Hughes_2013} developed a system of differential equations by compartmentalising a female only mosquito population into $3$ compartments:  susceptible, exposed and dengue infectious, as well as dividing the human population into susceptible, infected and recovered classes. Their underlying model describing the mosquito population dynamics in the presence of \Wolb is rather simplistic, as it does not take into account some well-known effects such as male killing, which may for example lead to sex ratio distortions. Moreover, they focused on the case of complete maternal transmission as well as complete cytoplasmic incompatibility. In fact, cytoplasmic incompatibility is often not complete, and the evolution of its expression in a host is particularly important during phases when a new genetically modified cytoplasmic incompatibility  inducing strain of \Wolb is introduced into a resident mosquito population;  see e.g. \cite{Engelstadter2006} for more details on the evolution of the expression of cytoplasmic incompatibility.

More recently, in \cite{Ndii2016,Ndii2015} Ndii et al. introduced a $12$-dimensional system of ordinary differential equations, and presented some model simulations using MATLAB. In contrast to \cite{Hughes_2013} they focused on the effects of seasonal temperature variations in Queensland,   Australia, as it was hypothesized in \cite{Ritchie} that unusual temperatures had a major impact during a dengue outbreak in $2008/2009$. Their model is also more realistic from the point of view that competition effects at the aquatic stage of the mosquitoes are accounted for. They focused on a single dengue outbreak spanning $31$ weeks from November $2008$ to May  $2009$ that occurred in north-east Australia, hence human demography was not taken into account in their model.

In this work we focus on the impact of \Wolb on dengue transmission dynamics in an age-structured human population. Although our focus is also not on the underlying human population dynamics per say, it is clear that dengue induced mortality and recovery rates of humans vary by age, see e.g. \cite{Thai}. Also it is clear that age plays an important role in the infection dynamics during an outbreak, as older people have more likely to have acquired immunity against a specific serotype of dengue. Mosquito to human dengue transmission efficiency also varies by human age, for example simply because of different biting rates, see e.g. \cite{Maier,Rock} for more details. Here we show how these effects can be incorporated and studied in a mathematical model, which takes the form of an infinite dimensional dynamical system. Among other things, we deduce a formula for the basic reproduction number $\mathcal{R}_0$, which allows us to assess the impact of the above mentioned age-related effects on the local asymptotic stability of the dengue free equilibrium. The formula we deduce for $\mathcal{R}_0$ also shows the intricate relationship between human demography and the presence of a dengue inhibiting \Wolb strain.

\section{Modelling approach}

To describe the population dynamics of (adult) mosquitoes in the (possible) presence of a dengue inhibiting \Wolb strain we use a model we derived from individual mating rules recently in \cite{FGLY}. In particular, in \cite{FGLY} we derived from basic  principles the following sex-structured mosquito population model, incorporating the known mechanisms by which {\it Wolbachia} interferes with its host's reproduction. The key mechanisms are: cytoplasmic incompatibility, male killing, and reduction in reproductive output.

\begin{equation}\label{Wolbachia-sex}
\begin{aligned}
M^{\prime}(t)  & = - \mu_m M(t) + \frac{\lambda(F_{total}(t))}{N(t)} \left[M(t)F(t) + (1-\beta)(1-\tau)(M(t)F_w(t) + M_w(t)F_w(t)) + (1-q)M_w(t)F(t)\right], \\
F^{\prime}(t) & = - \mu_f F(t) + \frac{\lambda(F_{total}(t))}{N(t)} \left[M(t)F(t) + (1-\beta)(1-\tau)(M(t)F_w(t) + M_w(t)F_w(t)) + (1-q)M_w(t)F(t)\right],  \\
M^{\prime}_w(t) &= - \mu_{mw} M_w(t) + \frac{\lambda(F_{total}(t))}{N(t)} (1-\beta)\tau(1-\gamma)\left[M(t)F_w(t) + M_w(t)F_w(t)\right], \\
F^{\prime}_w(t) &= - \mu_{fw} F_w(t) + \frac{\lambda(F_{total}(t))}{N(t)} (1-\beta)\tau\left[M(t)F_w(t) + M_w(t)F_w(t)\right].
\end{aligned}
\end{equation}
The variables and parameters we used in model \eqref{Wolbachia-sex} above are as follows:

\begin{itemize}
\item $M$ stands for the number of \Wolb free male mosquitoes.
\item $F$ stands for the number of \Wolb free female mosquitoes.
\item $M_w$ stands for the number of \Wolb infected male mosquitoes.
\item $F_w$ stands for the number of \Wolb infected female mosquitoes.
\item $M_{\rm{total}} = M+M_w$ is the total number of male mosquitoes.
\item $F_{\rm total} = F + F_w$ is the total number of female mosquitoes.
\item $N=M+M_w+F+F_w=M_{\rm{total}}+F_{\rm{total}}$ is the total number of mosquitoes.
\item $\beta\in [0,1]$ quantifies the (possible) reduction of reproductive output of \Wolb infected female mosquitoes due for example to life-shortening.
\item $\tau\in [0,1]$ is the probability of maternal transmission of {\it Wolbachia}.
\item $q\in [0,1]$ quantifies the strengths of the expression of cytoplasmic incompatibility.
\item $\gamma\in [0,1]$ quantifies the (possible) expression of male killing induced by  {\em Wolbachia}.
\item $\lambda(F_{total})$ is the average egg laying rate, which depends on the total number of female mosquitoes.
\item $0<\mu_m$ denotes the per-capita mortality rate of \Wolb  free male mosquitoes.
\item $0<\mu_f$ denotes the per-capita mortality rate of \Wolb  free female mosquitoes.
\item $0<\mu_{mw}$ denotes the per-capita mortality rate of \Wolb infected male mosquitoes.
\item $0<\mu_{fw}$ denotes the per-capita mortality rate of \Wolb infected female mosquitoes.
\end{itemize}

\vspace{2mm}

Note that the question of co-existing \Wolb types inducing simultaneously for example male killing and cytoplasmic incompatibility have fascinated researchers for long, see e.g.  \cite{Engelstadter2009,Engelstadter2004}. Indeed it is known that there are \Wolb strains which exhibit sex-ratio distortion in their host, for example via male killing, and cytoplasmic incompatibility,  simultaneously, see e.g. \cite{Hurst}.  

Since our model \eqref{Wolbachia-sex} describing  \Wolb infection dynamics in a sex-structured mosquito population takes into account most of the known {\em Wolbachia} induced mechanisms, it is rather complex, and therefore we summarise it's basic mathematical properties below.
\begin{thm}\label{base-model-th}
Assume that $\lambda\,:\,\mathbb{R}_+\,\to\,\mathbb{R}_+$ is continuous, and that there exists an $\bar{x}>0$, such that for all $x>\bar{x}$ the function $\lambda$ is monotone decreasing, and that 
\begin{equation*}
\displaystyle\lim_{x\to\infty}\lambda(x)=0.  
\end{equation*}
Then, for all $(M(0),F(0),M_w(0),F_w(0))\in\mathbb{R}_+^4$, model \eqref{Wolbachia-sex} admits a unique (global) solution, which remains non-negative and bounded for all $t\ge 0$. 
\end{thm}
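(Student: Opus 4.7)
The plan is to establish, in order, (i) local existence and uniqueness, (ii) forward invariance of $\mathbb{R}_+^4$, and (iii) an a priori bound that rules out finite-time blow-up and thereby yields a global solution.

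For (i), the only subtle feature of the vector field in \eqref{Wolbachia-sex} is the apparent singularity at $N=0$. I would first dispose of the trivial case $M(0)=F(0)=M_w(0)=F_w(0)=0$, in which the identically zero function is a solution. When $N(0)>0$, on the open set $\{N>0\}\subset\mathbb{R}_+^4$ the right-hand side of \eqref{Wolbachia-sex} is a rational function of $(M,F,M_w,F_w)$ multiplied by the factor $\lambda(F_{total})$, and is (locally) Lipschitz as soon as $\lambda$ is. Under this mild strengthening of the standing hypothesis (already adopted in \cite{FGLY}), the Picard--Lindel\"of theorem supplies a unique maximal solution on some interval $[0,T^*)$.

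For (ii), I would verify that each of the four coordinate hyperplanes is non-attracting. Direct inspection of \eqref{Wolbachia-sex} shows that setting $M=0$ (respectively $F=0$, $M_w=0$, $F_w=0$) with the remaining variables nonnegative yields $M'\ge 0$ (respectively $F'\ge 0$, $M_w'\ge 0$, $F_w'\ge 0$), because the bracketed expressions are nonnegative linear combinations of products of nonnegative variables, with coefficients $1-\beta,\,1-\tau,\,1-q,\,1-\gamma\in[0,1]$. A standard tangent-cone argument then gives forward invariance of $\mathbb{R}_+^4$. As a by-product, $N'\ge -\mu_{\max}N$ with $\mu_{\max}=\max\{\mu_m,\mu_f,\mu_{mw},\mu_{fw}\}$, so $N(t)\ge N(0)e^{-\mu_{\max}t}$ remains strictly positive whenever $N(0)>0$, and the vector field stays smooth along the trajectory.

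The main obstacle is (iii) the global bound, and the key insight is to exploit $\lambda(x)\to 0$ at the level of $F_{total}$ rather than $N$; a direct estimate on $N$ only produces $N'\le(C\|\lambda\|_\infty-\mu_{\min})N$, which is useless when $\|\lambda\|_\infty$ is large. Adding the $F$- and $F_w$-equations and using the crude bound
\begin{equation*}
\bigl[MF+(1-\beta)(1-\tau)(MF_w+M_wF_w)+(1-q)M_wF\bigr]+(1-\beta)\tau(MF_w+M_wF_w)\,\le\,2\,M_{total}\,F_{total},
\end{equation*}
together with $M_{total}/N\le 1$, yields $F_{total}'\le-\mu_{f,\min}F_{total}+2\lambda(F_{total})F_{total}$ with $\mu_{f,\min}=\min\{\mu_f,\mu_{fw}\}$. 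Since $\lambda(x)\to 0$, one can pick $x^*>\bar{x}$ with $2\lambda(x)<\mu_{f,\min}$ for every $x\ge x^*$; hence $F_{total}'<0$ whenever $F_{total}\ge x^*$, and a standard differential inequality gives $F_{total}(t)\le\max\{F_{total}(0),x^*\}$ on $[0,T^*)$. With $\lambda(F_{total})$ now bounded by some $\Lambda^*$, summing all four equations and again using $M_{total}/N\le 1$ produces $N'\le-\mu_{\min}N+C'\Lambda^*F_{total,\max}$, so $N$, and hence each component, is bounded on $[0,T^*)$. The standard continuation principle then extends the solution to all of $[0,\infty)$, completing the proof.
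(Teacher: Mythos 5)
Your proposal is correct, and the core of the argument coincides with the paper's: the same quasi-positivity check on the coordinate hyperplanes for invariance of $\mathbb{R}_+^4$, and the same key dissipativity estimate obtained by adding the female equations, bounding the mating terms by a multiple of $M_{\rm total}F_{\rm total}$ and using $M_{\rm total}/N\le 1$ to get $F_{\rm total}'\le\left[-\min\{\mu_f,\mu_{fw}\}+c\,\lambda(F_{\rm total})\right]F_{\rm total}$, after which $\lambda(x)\to 0$ caps $F_{\rm total}$ and a linear inequality caps $M_{\rm total}$ (your factor $2$ is harmless). Where you differ is the route to globality: the paper first gets global existence on $\mathbb{R}^4\setminus\{\mathbf{0}\}$ from the one-sided linear growth condition $\langle \tilde{G}(\mathbf{w}),\mathbf{w}\rangle\le 2L\,\|\mathbf{w}\|_2^2$ (citing Pr\"uss--Wilke), with $L$ a bound for $\lambda$, and only then proves positivity (via Nagumo) and boundedness; you instead prove positivity first and then use the a priori bound plus the standard continuation principle, which is equally valid and arguably more self-contained, at the price of needing the invariance and the bound before you can rule out blow-up. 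Your observation that $N(t)\ge N(0)e^{-\mu_{\max}t}>0$ keeps the trajectory away from the singular point $N=0$ is a nice touch the paper handles instead by extending the vector field by $\mathbf{0}$ at the origin; working on $\{N>0\}$ as you do is in fact cleaner, since on $\mathbb{R}^4\setminus\{\mathbf{0}\}$ the denominator $N$ can vanish. Finally, your explicit strengthening to $\lambda$ locally Lipschitz for uniqueness is not a defect relative to the paper: the paper's proof asserts local Lipschitz continuity of its vector field while only assuming $\lambda$ continuous, so it implicitly relies on the same hypothesis; under mere continuity one would only get existence (Peano), not the claimed uniqueness, in either argument.
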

\begin{proof} Let us introduce the notation ${\bf w}(t):=(M(t),F(t),M_w(t),F_w(t))^T$, and rewrite model \eqref{Wolbachia-sex} in matrix form as 
\begin{equation}\label{matrix-form}
{\bf w}'(t)=G({\bf w}(t)),\quad {\bf w}(0)={\bf w}_0\in\mathbb{R}^4,
\end{equation}
where we define
\begin{equation*}
G({\bf w}):=\begin{Bmatrix}
\tilde{G}({\bf w}) & \text{if}\quad {\bf w}\neq {\bf 0} \\
 \,\,\,{\bf 0} & \text{if} \quad {\bf w}={\bf 0}
\end{Bmatrix},  \quad \text{and}\quad \tilde{G}({\bf w}):=\begin{pmatrix}
G_1({\bf w}) \\
G_2({\bf w}) \\
G_3({\bf w}) \\
G_4({\bf w})
\end{pmatrix},
\end{equation*}
where
\begin{equation}
\begin{aligned}
G_1({\bf w}(t))&=-\mu_m M(t) + \frac{\lambda(|F_{total}(t)|)}{N(t)} \left[M(t)F(t) + (1-\beta)(1-\tau)(M(t)F_w(t) + M_w(t)F_w(t)) + (1-q)M_w(t)F(t)\right], \\
G_2({\bf w}(t))&=-\mu_f F(t) + \frac{\lambda(|F_{total}(t)|)}{N(t)} \left[M(t)F(t) + (1-\beta)(1-\tau)(M(t)F_w(t) + M_w(t)F_w(t)) + (1-q)M_w(t)F(t)\right],\\
G_3({\bf w}(t))&=-\mu_{mw} M_w(t) + \frac{\lambda(|F_{total}(t)|)}{N(t)} (1-\beta)\tau(1-\gamma)\left[M(t)F_w(t) + M_w(t)F_w(t)\right],  \\
G_4({\bf w}(t))&=-\mu_{fw} F_w(t) + \frac{\lambda(|F_{total}(t)|)}{N(t)} (1-\beta)\tau\left[M(t)F_w(t) + M_w(t)F_w(t)\right].&
\end{aligned}
\end{equation}
Note that if ${\bf w}(0)={\bf 0}$ then ${\bf w}(t)\equiv {\bf 0}$ for all $t\ge 0$ (the trivial steady state of model \eqref{matrix-form}). At the same time we note that $\tilde{G}$ is locally Lipschitz continuous on (the open set) $\mathbb{R}^4\setminus \{{\bf 0}\}$, hence for any ${\bf w}(0)\in \mathbb{R}^4\setminus \{{\bf 0}\}$ a unique local solution of \eqref{matrix-form} exists by the Picard-Lindel\"{o}f theorem. 

\noindent To show that solutions exist globally, it is sufficient to show (see e.g. Corollary 2.5.3 in \cite{Pruss2010}) that there exists a constant $\omega\ge 0$, such that 
\begin{equation}\label{global-criterion}
\left\langle \tilde{G}({\bf w}),{\bf w}\right\rangle\le \omega\, ||{\bf w}||_2^2,
\end{equation}
for all ${\bf w}\in\mathbb{R}^4\setminus\{{\bf 0}\}$. Above in \eqref{global-criterion} $\langle \cdot,\cdot\rangle$ stands for the usual inner product on $\mathbb{R}^4$, and $||\cdot||_2$ for the standard Euclidean norm on $\mathbb{R}^4$. 

First we note that since $\lambda$ is continuous we have 
$$\displaystyle\sup_{x\in[0,\bar{x}]}\lambda(x)=\displaystyle\max_{x\in [0,\bar{x}]}\lambda(x)=:L<\infty,$$ 
hence $\lambda$ is bounded by $L$ on $\mathbb{R}_+$.
 Next noting that $M,M_w,F,F_w\le N$, and that we have $\beta,\gamma,q\in [0,1]$ we obtain the following estimate:
\begin{equation}
\begin{aligned}
\left\langle \tilde{G}({\bf w}),{\bf w}\right\rangle=  & G_1({\bf w})\,M+G_2({\bf w})\, F+G_3({\bf w})\, M_w+G_4({\bf w})\, F_w \\
\le & \lambda(|F+F_w|)\,2\,|M+M_w|\,|F+F_w| \\
\le & 2L\,|M+M_w|\,|F+F_w| \\
\le & 2L\,(M^2+M_w^2+F^2+F_w^2)=2L\, ||{\bf w}||_2^2,
\end{aligned}
\end{equation}
for all ${\bf w}\in\mathbb{R}^4\setminus\{{\bf 0}\}$, and therefore solutions exist globally.

Naturally we are only interested in non-negative solutions of model \eqref{Wolbachia-sex}. To this end we note that on $\mathbb{R}^4_+\setminus\{\bf 0\}$ we have
\begin{equation*}
G_1({\bf w})|_{M=0}\ge 0,\quad G_2({\bf w})|_{F=0}\ge 0, \quad G_3({\bf w})|_{M_w=0}\ge 0,\quad G_4({\bf w})|_{F_w=0}\ge 0,
\end{equation*}
and therefore positivity of solutions follows from results in \cite{Nagumo1942}.

Next we show boundedness of solutions with initial condition in $\mathbb{R}^4_+\setminus\{{\bf 0}\}$. Adding the second and fourth equation in \eqref{Wolbachia-sex}, and noticing that $\beta, q \in [0,1]$,  we  have
\begin{align}
F_{\rm{total}}^{\prime}(t) \leq & \displaystyle  -\min\{\mu_f, \mu_{fw}\}\, F_{\rm{total}}(t) + \frac{\lambda(F_{\rm{total}}(t))}{M_{\rm{total}}(t) + F_{\rm{total}}}(t)M_{\rm{total}}(t)F_{\rm{total}}(t)\nonumber \\
\le & \left[-\min\{\mu_f, \mu_{fw}\} + \lambda(F_{\rm{total}}(t))\right] F_{\rm{total}}(t).
\end{align}
Note that if $\min\{\mu_f, \mu_{fw}\}>L$ (the maximum of $\lambda$), then from above we see that $F_{\rm{total}}(t)\le F_{\rm{total}}(0),\,\,\forall\, t\ge 0$.  Otherwise, we have
$$
\limsup_{t\rightarrow \infty} F_{\rm{total}}(t) \leq \bar{F},
$$
where $\bar{F}$ is such that $\lambda(\bar{F}) = \min\{\mu_f,\mu_{fw}\}$. Note that $\bar{F}$ exists since we assumed that $\lambda$ is monotone decreasing for $x>\bar{x}$, and that it tends to zero as $x$ tends to infinity.

Since $F_{\rm{total}}(t)$ remains bounded it follows that $M_{\rm{total}}(t)$ is bounded as well, because adding the first and third equation in \eqref{Wolbachia-sex} we have
\begin{equation}\label{M-diff-ineq}
\begin{aligned}
M_{\rm{total}}^{\prime}(t) \leq & \displaystyle  -\min\{\mu_m, \mu_{mw}\}\,M_{\rm{total}}(t) + \frac{\lambda(F_{\rm{total}}(t))}{M_{\rm{total}}(t) + F_{\rm{total}}(t)}M_{\rm{total}}(t)F_{\rm{total}}(t)  \\
\leq & -\min\{\mu_m, \mu_{mw}\}\, M_{\rm{total}}(t)+ \lambda(F_{\rm{total}}(t))F_{\rm{total}}(t)  \\
 \leq & -\min\{\mu_m, \mu_{mw}\}\, M_{\rm{total}}(t) + \widetilde{F}L,
 \end{aligned}
 \end{equation}
where $\widetilde{F}:=\max\{\bar{F},F_{\rm{total}}(0)\}$ is an upper bound for $F_{\rm{total}}(t)$. From the differential inequality \eqref{M-diff-ineq}, we can conclude that $M_{\rm{total}}(t)$ is bounded, hence solutions of \eqref{Wolbachia-sex} remain bounded. 
\end{proof}

We note that the assumptions we imposed on the function $\lambda$ (the average egg laying rate of gravid female mosquitos)  in Theorem \ref{base-model-th} above are much less restrictive than those we imposed in \cite{FGLY}. In particular, our current assumptions on $\lambda$ allow us to use in our model some biologically realistic functions, for example an egg laying rate $\lambda$, giving rise to Allee effect. 

Concerning results on the existence and stability of equilibria of model \eqref{Wolbachia-sex}, we refer the interested reader to \cite{FGLY} (with a slight warning to note the different set of assumptions on $\lambda$). Here we only note that, in contrast for example to the model used in \cite{Hughes_2013}, it was shown in \cite{FGLY} that  model \eqref{Wolbachia-sex} may admit multiple coexistence steady states $(M^*,F^*,M^*_w,F^*_w)$, in some parameter regimes,  possibly giving rise to bi-stability.

Our goal here is to use model \eqref{Wolbachia-sex} as a basis to assess the effects of {\it Wolbachia} infection dynamics in a sex-structured mosquito population on dengue disease dynamics in an age-structured human population. To this end we adapt a simple SEI approach, similar to that in \cite{Hughes_2013}, and we further compartmentalise the \Wolb infected and uninfected female mosquito populations into susceptible: $F_{ws}/F_{s}$, exposed: $F_{we}/F_{e}$ and dengue infected: $F_{wi}/F_{i}$ classes. Notably, the age-structured human population is divided into susceptible: $h_s$, dengue infected: $h_i$, and recovered: $h_r$ classes. These assumptions yield the following coupled system of ordinary and partial differential equations.

\begin{equation*}
\begin{aligned}
 F_s^{\prime}(t)  = &\displaystyle \frac{\lambda(F_{\rm{total}}(t))}{N(t)} (M(t)F(t) + (1-\beta)(1-\tau)(M(t)F_w(t) + M_w(t)F_w(t)) + (1-q)M_w(t)F(t) ) \\
&   \displaystyle - \mu_f F_s(t) - \alpha_f p_{hf} F_s(t) \frac{H_i(t)}{H_{\rm{total}}(t)},\\
 F_e^{\prime}(t)   = &  \displaystyle  \alpha_f p_{hf} F_s(t) \frac{H_i(t)}{H_{\rm{total}}(t)}  -\mu_f F_e(t) - \nu_f F_e(t), \\
 F_i^{\prime}(t) =&  \displaystyle \nu_f F_e(t) - \mu_f F_i(t),\\
 F_{ws}^{\prime}(t)   =  & \displaystyle \frac{\lambda(F_{\rm{total}}(t))}{N(t)} (1-\beta)\tau(M(t)F_w(t) + M_w(t)F_w(t))  - \mu_{fw} F_{ws}(t) - \alpha_{fw} p_{hf} F_{ws}(t) \frac{H_i(t)}{H_{\rm{total}}(t)},\\
 F_{we}^{\prime}(t)   = & \displaystyle  \alpha_{fw} p_{hf} F_{ws}(t) \frac{H_i(t)}{H_{\rm{total}}(t)}  -\mu_{fw} F_{we}(t) - \nu_{fw} F_{we}(t), \\
 F_{wi}^{\prime}(t)  = & \displaystyle \nu_{fw} F_{we}(t) - \mu_{fw} F_{wi}(t),\\
  M^{\prime}(t)  = & \displaystyle \frac{\lambda(F_{\rm{total}}(t))}{N(t)} (M(t)F(t) + (1-\beta)(1-\tau)(M(t)F_w(t) + M_w(t)F_w(t)) + (1-q)M_w(t)F(t) ) - \mu_m M(t),\\
 M_w^{\prime}(t)   = & \displaystyle \frac{\lambda(F_{\rm{total}})}{N(t)} (1-\beta)\tau(1-\gamma)(M(t)F_w(t) + M_w(t)F_w(t))  - \mu_{mw} M_w(t), \\
 \end{aligned}
 \end{equation*}
 \begin{equation}\label{F}
 \begin{aligned}
 \frac{\partial h_s(t,a)}{\partial t}  + \frac{\partial h_s(t,a)}{\partial a}  = &   -\mu_{hs}(a) h_s(t,a)
  -\left(\alpha_f\alpha(a)\,p_{fh}F_i(t)\frac{h_s(t,a)}{H_{\rm total}(t)}+\alpha_{fw}\alpha(a)\,\varepsilon\,p_{fh}F_{wi}(t)\frac{h_s(t,a)}{H_{\rm total}(t)}\right) \\
   \frac{\partial h_i(t,a)}{\partial t}  +  \frac{\partial h_i(t,a)}{\partial a} = &  -\mu_{hi}(a)h_i(t,a)-\rho_h(a)h_i(t,a)   +\left(\alpha_f\alpha(a)\,p_{fh}F_i(t)\frac{h_s(t,a)}{H_{\rm total}(t)}   
   +\alpha_{fw}\alpha(a)\, \varepsilon\, p_{fh}F_{wi}(t)\frac{h_s(t,a)}{H_{\rm total}(t)}\right) \\
  \frac{\partial h_r(t,a)}{\partial t}  +  \frac{\partial h_r(t,a)}{\partial a}  = & -\mu_{hr}(a)h_r(t,a)+\rho_h(a)h_i(t,a),\\
   h_s(t,0)=& \int_0^\infty\beta_h(a)\left(h_s(t,a)+h_r(t,a)\right)\,\ud a, \\
 h_i(t,0)=& 0, \\
 h_r(t,0)=& 0,
\end{aligned}
\end{equation}
where we defined
\begin{equation}\label{H-variables}
H_{\rm total}(t):=  H_s(t)+H_i(t)+H_r(t):=\int_0^\infty h_s(t,a)\,\ud a+ \int_0^\infty h_i(t,a)\,\ud a+\int_0^\infty h_r(t,a)\,\ud a .
\end{equation}
The model parameters in \eqref{F} are as follows.
\begin{itemize}
\item $\alpha_f$ is the biting rate of female \Wolb uninfected mosquitoes.
\item $\alpha_{fw}$ is the biting rate of female \Wolb infected mosquitoes.
\item $\alpha(a)$ is the probability that a human of age $a$ is bitten by a mosquito.
\item $p_{hf}$ is the  probability of transmission of dengue from an infectious human to a dengue susceptible female mosquito.
\item $p_{fh}$ is the probability of transmission of dengue from a dengue infectious {\it Wolbachia} uninfected female mosquito to a susceptible human.
\item $\varepsilon\,p_{fh}$ is the probability of transmission of dengue from a dengue infectious {\it Wolbachia} infected female mosquito to a susceptible human. 
\item $\nu_f$ is the per-capita transition rate of dengue exposed female \Wolb uninfected mosquitoes to the infectious stage of dengue.
\item $\nu_{fw}$ is the per-capita transition rate of dengue exposed female \Wolb infected mosquitoes to the infectious stage of dengue.
\item $\mu_{hs}(a)$, $\mu_{hi}(a)$ and $\mu_{hr}(a)$ are the mortality rates  of susceptible, infectious and recovered humans of age $a$,  respectively.
\item $\beta_h(a)$ is the age-dependent fertility rate of humans. 
\item $\rho_h(a)$ is the age-dependent recovery rate of humans.
\end{itemize}

\begin{figure}
\includegraphics[height=121mm]{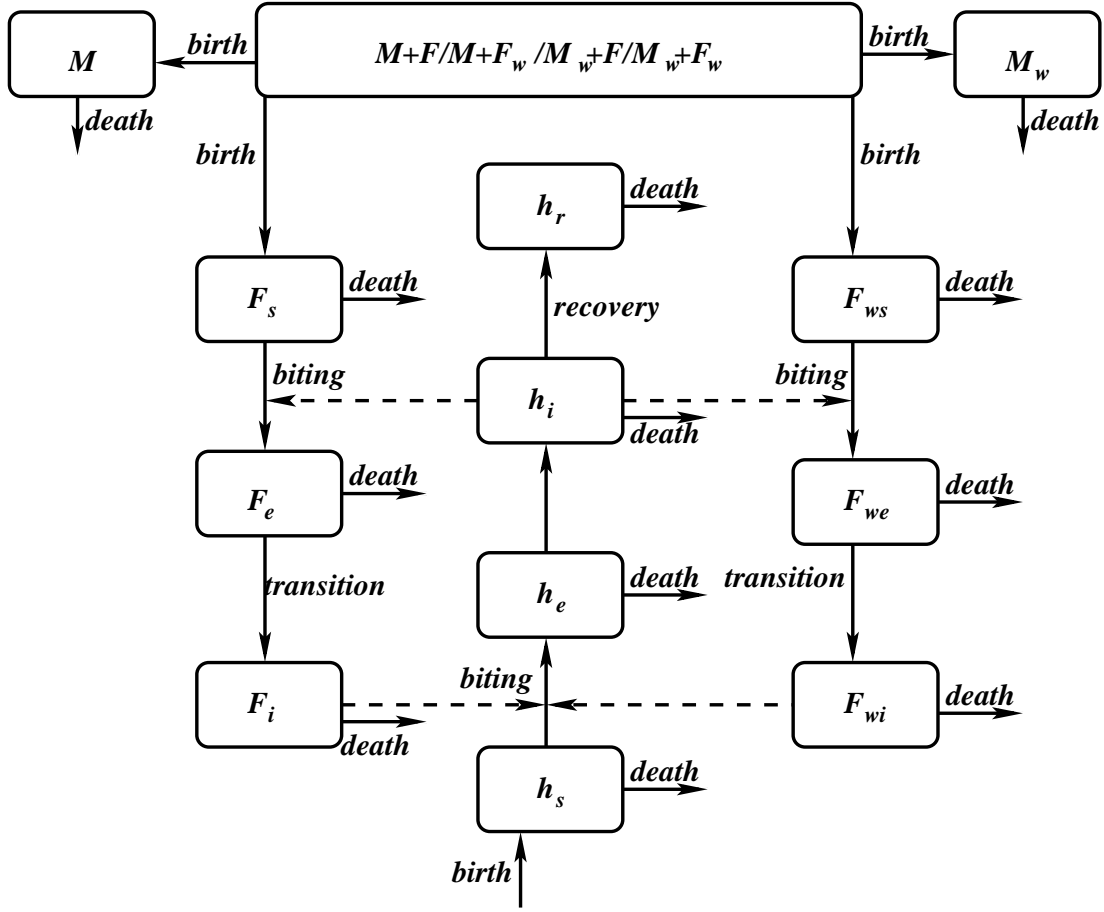}
\caption{Schematic flowchart of model \eqref{F}.}
\end{figure}

Note that we followed a simple mass action approach to model the dengue infection process, whereby in principle both \Wolb infected and uninfected female  mosquitoes may transmit dengue to humans via biting. Following recent reports in \cite{Blagrove,Hoffmann,PengLu},  we hypothesize that infection by a certain \Wolb strain blocks or at least inhibits dengue virus replication and in turn the probability of dengue transmission from \Wolb infected mosquitoes to humans. The likelihood/strength of this is measured by the parameter $\varepsilon\in[0,1]$.  It is well documented that the duration of  dengue fever is typically a few weeks, which is very short compared to the human reproductive cycle, hence we assume that humans are born susceptible. Also since the overwhelming majority of new dengue infections are caused by mosquitoes, and the event of a dengue infected mother giving birth to dengue infectious offspring is rather rare we assume that the flux of infected human offspring is zero. We also note that as we introduced an age-dependent biting rate, the term $\alpha(a)\,p_{fh}$ can be reinterpreted and used to understand and assess the impact of age-related immunity effects. 
Indeed it is well documented that it is possible to acquire immunity against one of the $4$ serotypes of dengue, although how long such immunity lasts and the possibility of cross-infection remains a major challenge, see for example \cite{Aguas} for more details. It is then clear that a portion of the older individuals have possibly acquired immunity against a specific serotype of dengue, hence the rate of new infections in the younger age groups will be often larger. This can be accounted for in the function $\alpha(a)\,p_{fh}$, and this is another reason while it is apparent that the age-structure of the human population plays a major role on dengue disease dynamics.

\section{Results}

  First note that in the absence of dengue, our model \eqref{F} decouples into the sex-structured mosquito population model \eqref{Wolbachia-sex}, and the linear age-structured model
\begin{equation}\label{PDE-steady}
\frac{\partial h_s(t,a)}{\partial t}  +  \frac{\partial h_s(t,a)}{\partial a}  =   -\mu_{hs}(a) h_s(t,a),\quad h_s(t,0)=\int_0^\infty\beta_h(a)h_s(t,a)\,\ud a.
\end{equation}
The age-structured model \eqref{PDE-steady} admits non-trivial steady states of the form
\begin{equation}
h_s^*(a)=h_s^*(0)\exp\left(-\int_0^a \mu_{hs}(\bar{a})\,\ud\bar{a}\right),
\label{040915_1}
\end{equation}
if and only if
\begin{equation}\label{PDE-steady2}
\int_0^\infty\beta_h(a)\exp\left(-\int_0^a \mu_{hs}(\bar{a})\,\ud\bar{a}\right)\ud a=1.
\end{equation}
Note that the expression on the left hand side of \eqref{PDE-steady2} is the net reproduction number of the (susceptible) human population. Since our focus here is not on the human population dynamics it is natural to assume that \eqref{PDE-steady2} holds, see also Theorem \ref{theorem2} later on. The results we recalled in the previous section then show that a variety of dengue-free steady states exist, depending on the specific values of the model parameters. Naturally, we are interested to study the stability of such equilibria.  To this end we linearise the $h_i$ equation of~(\ref{F}) at a dengue-free equilibrium, which reads
\begin{equation}
\frac{\partial h_i(t,a)}{\partial t}+\frac{\partial h_i(t,a)}{\partial a} =  -(\mu_{hi}(a)+\rho_h(a))h_i(t,a) 
 +\alpha_f\alpha(a)\,p_{fh}\frac{h_s^*(a)}{H^*}F_i(t)+\varepsilon\,\alpha_{fw}\alpha(a)\,p_{fh}\frac{h_s^*(a)}{H^*}F_{wi}(t), 
\label{040915_2}
\end{equation}
where we let $H^*$ denote the total  number of humans at the dengue-free equilibrium. The boundary condition accompanying \eqref{040915_2} simply reads $h_i(t,0)=0$, since we assumed that humans are born susceptible. If we define   $h_i^{\xi}(a):=h_i(a+\xi,a)$, then this function satisfies the equation 
\begin{equation*}
\frac{dh_i^{\xi}(a)}{da}+(\mu_{hi}(a)+\rho_h(a))h_i^{\xi}(a)=\alpha_f\alpha(a)\,p_{fh}\frac{h_s^*(a)}{H^*}F_i(a+\xi)+\varepsilon\,\alpha_{fw}\alpha(a)\,p_{fh}\frac{h_s^*(a)}{H^*}F_{wi}(a+\xi).
\end{equation*}
Since $h_i^{\xi}(0)=h_i(\xi,0)=0$, the solution of the equation above is
\begin{equation*}
h_i^{\xi}(a)=  \int_0^a\exp\left(-\int_{\bar{a}}^a(\mu_{hi}(\eta)+\rho_h(\eta))\,\ud\eta\right) 
  \left[\alpha_f\alpha(\bar{a})\,p_{fh}\frac{h_s^*(\bar{a})}{H^*}F_i(\bar{a}+\xi)+\varepsilon\,\alpha_{fw}\alpha(\bar{a})\,p_{fh}\frac{h_s^*(\bar{a})}{H^*}F_{wi}(\bar{a}+\xi)\right]\,\ud\bar{a}.
\end{equation*}
Writing  $\xi=t-a$, we have
\begin{equation*}
\begin{aligned}
h_i(t,a)=   
\int_0^a  & \exp\left(-\int_{\bar{a}}^a(\mu_{hi}(\eta)+\rho_h(\eta))\,\ud\eta\right) \\
 \times & \left[\alpha_f\alpha(\bar{a})\,p_{fh}\frac{h_s^*(\bar{a})}{H^*}F_i(\bar{a}+t-a)+\varepsilon\,\alpha_{fw}\alpha(\bar{a})\,p_{fh}\frac{h_s^*(\bar{a})}{H^*}F_{wi}(\bar{a}+t-a)\right]\,\ud\bar{a}.
 \end{aligned}
\end{equation*}
The total number of infectious humans at time $t$ is $H_i(t)=\displaystyle\int_0^{\infty}h_i(t,a)\,da$, and therefore we have
\begin{equation}
\begin{aligned}
H_i(t)= \int_0^{\infty}\int_0^a  & \exp\left(-\int_{\bar{a}}^a(\mu_{hi}(\eta)+\rho_h(\eta))\,\ud\eta\right) \\
 & \times \left[\alpha_f\alpha(\bar{a})\,p_{fh}\frac{h_s^*(\bar{a})}{H^*}F_i(\bar{a}+t-a)+\varepsilon\,\alpha_{fw}\alpha(\bar{a})\,p_{fh}\frac{h_s^*(\bar{a})}{H^*}F_{wi}(\bar{a}+t-a)\right]\,\ud\bar{a}\,\ud a.
\end{aligned}
\label{040915_3}
\end{equation}
With $H_i(t)$ given by~(\ref{040915_3}), the linearised equations for the exposed and infectious compartments at the dengue-free equilibrium are
\begin{equation}\label{eq-compartments}
\begin{aligned}
F_e'(t)  = & \alpha_f p_{hf}\frac{F_s^*}{H^*}H_i(t)-(\mu_f+\nu_f)F_e(t), \\
F_i'(t)  = & \nu_f F_e(t) - \mu_f F_i(t), \\
F_{we}'(t)  = & \alpha_{fw}p_{hf}\frac{F_{ws}^*}{H^*}H_i(t)-(\mu_{fw}+\nu_{fw})F_{we}(t), \\
F_{wi}'(t)  = & \nu_{fw}F_{we}(t)-\mu_{fw}F_{wi}(t),
\end{aligned}
\end{equation}
which is a system of distributed delay equations, due to the nature of the expression \eqref{040915_3} for the total population size of infected humans. Seeking solutions of the form
\begin{equation*}
(F_e(t),F_i(t),F_{we}(t),F_{wi}(t))=e^{\lambda t}(c_1,c_2,c_3,c_4),\quad \lambda\in\mathbb{C}
\end{equation*}
where $c_1,c_2,c_3,c_4$ are constants, and letting
\begin{equation}
\Phi(\lambda)=\int_0^{\infty}\int_0^a\exp\left(-\int_{\bar{a}}^a(\mu_{hi}(\eta)+\rho_h(\eta))\,\ud\eta\right)\alpha(\bar{a})\,h_s^*(\bar{a})e^{\lambda(\bar{a}-a)}\,\ud\bar{a}\,\ud a,
\label{040915_4}
\end{equation}
we deduce the following characteristic equation for $\lambda$:
\begin{equation}\label{040915_5}
\begin{aligned}
 \left(\lambda+\mu_f+\nu_f-\frac{\alpha_f^2p_{hf}p_{fh}F_s^*\nu_f}{H^{*^2}}\frac{\Phi(\lambda)}{\lambda+\mu_f}\right) &
\left(\lambda+\mu_{fw}+\nu_{fw}-\frac{\varepsilon\,\alpha_{fw}^2p_{hf}p_{fh}F_{ws}^*\nu_{fw}}{H^{*^2}}\frac{\Phi(\lambda)}{\lambda+\mu_{fw}}\right) \\
 & = \varepsilon\,\alpha_f^2\alpha_{fw}^2p_{hf}^2p_{fh}^2\frac{F_s^*F_{ws}^*\nu_f\nu_{fw}}{H^{*^4}}\frac{(\Phi(\lambda))^2}{(\lambda+\mu_f)(\lambda+\mu_{fw})}.
\end{aligned}
\end{equation}

A careful study of the characteristic equation \eqref{040915_5} allows us to establish the following result  concerning the local asymptotic stability of a dengue-free steady state. The proof of the theorem below is contained in Appendix A.

\begin{thm}\label{theorem1}
Suppose that a dengue-free equilibrium is locally asymptotically stable as a solution of the subsystem of system~(\ref{F}) in which the $F_e$, $F_i$, $F_{we}$, $F_{wi}$, $h_i$ and $h_r$ variables remain identically zero; and let
\begin{equation}\label{160216_1}
\begin{aligned}
\mathcal{R}_0:= & \,\,\frac{p_{hf}p_{fh}h_s^*(0)}{H^{*^2}}\left[\frac{\varepsilon\,\alpha_{fw}^2F_{ws}^*\nu_{fw}}{\mu_{fw}(\mu_{fw}+\nu_{fw})}
+\frac{\alpha_{f}^2F_{s}^*\nu_{f}}{\mu_{f}(\mu_{f}+\nu_{f})}  
\right]  \\
 & \times
\int_0^{\infty}\int_0^a\alpha(\bar{a})\,\exp\left\{-\int_0^{\bar{a}}\mu_{hs}(\eta)\,\ud\eta
-\int_{\bar{a}}^a(\mu_{hi}(\eta)+\rho_h(\eta))\,\ud\eta\right\}\,\ud\bar{a}\,\ud a. 
\end{aligned}
\end{equation}
Then, if $\mathcal{R}_0<1$ holds, the dengue-free equilibrium  is also locally asymptotically stable to perturbations involving small introductions of dengue.
\end{thm}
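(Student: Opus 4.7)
My plan is to deduce local asymptotic stability from a root-location analysis of the characteristic equation \eqref{040915_5}. I would first observe that the right-hand side of \eqref{040915_5} equals the product of the two $\Phi(\lambda)$-containing correction terms that appear inside the bracketed factors on the left, so expansion produces an exact cancellation of the $\Phi(\lambda)^2$ term. What remains rearranges, upon division by $(\lambda+\mu_f+\nu_f)(\lambda+\mu_{fw}+\nu_{fw})$ (nonzero when $\Re\lambda\ge 0$), into the fixed-point form $g(\lambda)=1$, where
\begin{equation*}
g(\lambda):=\frac{\alpha_f^2 p_{hf}p_{fh}F_s^*\nu_f\,\Phi(\lambda)}{H^{*^2}(\lambda+\mu_f)(\lambda+\mu_f+\nu_f)}+\frac{\varepsilon\,\alpha_{fw}^2 p_{hf}p_{fh}F_{ws}^*\nu_{fw}\,\Phi(\lambda)}{H^{*^2}(\lambda+\mu_{fw})(\lambda+\mu_{fw}+\nu_{fw})}.
\end{equation*}
Substituting $h_s^*(\bar a)=h_s^*(0)\exp(-\int_0^{\bar a}\mu_{hs})$ from \eqref{040915_1} into \eqref{040915_4} and evaluating at $\lambda=0$ gives $g(0)=\mathcal R_0$.

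Next I would rule out roots with $\Re\lambda\ge 0$ by a standard magnitude estimate. Suppose, for contradiction, that $g(\lambda_0)=1$ for some $\lambda_0$ with $\Re\lambda_0\ge 0$. In the integrand of $\Phi(\lambda_0)$ the exponent $\lambda_0(\bar a-a)$ has non-positive real part since $\bar a\le a$, and the remaining kernel is non-negative, so $|\Phi(\lambda_0)|\le\Phi(0)$. Combined with $|\lambda_0+\kappa|\ge \kappa$ for $\kappa>0$ (applied to $\kappa\in\{\mu_f,\mu_f+\nu_f,\mu_{fw},\mu_{fw}+\nu_{fw}\}$), the triangle inequality applied to the two nonnegative real-valued pieces of $g(0)$ yields
\begin{equation*}
|g(\lambda_0)|\le g(0)=\mathcal R_0<1,
\end{equation*}
contradicting $g(\lambda_0)=1$. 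Hence every root of \eqref{040915_5} lies in the open left half plane.

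It remains to translate this to asymptotic stability of the full linearisation. Since $H_i(t)$ was already reduced in \eqref{040915_3} to a Volterra-type functional of $F_i$ and $F_{wi}$, the linearised $(F_e,F_i,F_{we},F_{wi})$ subsystem is a distributed-delay system whose fundamental solutions are governed precisely by the roots of \eqref{040915_5}, so all four mosquito compartments decay exponentially; $h_i(t,a)$ then inherits decay as the forced output of the representation preceding \eqref{040915_3}, and $h_r$ inherits decay via the $\rho_h h_i$ source. The main obstacle I would highlight is the bound $|\Phi(\lambda)|\le \Phi(0)$: although it looks immediate, it relies crucially on the inner integral running over $\bar a\in[0,a]$, so that $\bar a-a\le 0$, together with non-negativity of the survival kernel; without that sign information the magnitude estimate, and with it the whole argument, would collapse.
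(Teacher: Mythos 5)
Your proposal is correct and takes essentially the same route as the paper's own proof: the cancellation of the $\Phi(\lambda)^2$ terms and the rearrangement of \eqref{040915_5} into the scalar equation $g(\lambda)=1$ is exactly the paper's reduction to $\Psi(\lambda)=1$ in \eqref{160216_2}, with $g(0)=\Psi(0)=\mathcal{R}_0$, and the exclusion of roots with ${\rm Re}\,\lambda\ge 0$ uses the same modulus estimates $|\Phi(\lambda)|\le\Phi({\rm Re}\,\lambda)\le\Phi(0)$ and $|\lambda+\kappa|\ge\kappa$. The only cosmetic difference is that the paper first disposes of real roots by a monotonicity argument and then treats complex roots, whereas you handle all roots in the closed right half-plane in a single contradiction argument.
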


Note that the formula we deduced above for (the basic reproductive number) $\mathcal{R}_0$ characterizes the local stability of the dengue-free equilibrium of our mathematical model \eqref{F}.  Therefore we hypothesize that the severity of a future dengue outbreak can be estimated by parametrising our model and computing (or at least approximating) the value of $\mathcal{R}_0$ given by formula  \eqref{160216_1}. Naturally one would expect that larger values of $\mathcal{R}_0$ will potentially result in more severe dengue outbreaks.  Indeed, as the result below demonstrates, endemic steady states do not exist if $\mathcal{R}_0<1$.

\begin{proposition}
If $\mathcal{R}_0<1$ then model \eqref{F} does not admit a steady state with $H_i^*=\displaystyle\int_0^\infty h_i^*(a)\,\ud a>0$.
\end{proposition}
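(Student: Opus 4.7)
Plan: I argue by contradiction. Assume a steady state of \eqref{F} with $H_i^* > 0$ exists; the aim is to derive a self-consistency equation which, combined with pointwise bounds between the endemic and dengue-free state quantities, forces $1 \leq \mathcal{R}_0$, contradicting the hypothesis.

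Setting every time derivative in \eqref{F} to zero, the algebraic $F_e, F_i, F_{we}, F_{wi}$ equations give
\[
F_i^* = \frac{\alpha_f p_{hf}\nu_f F_s^* H_i^*}{\mu_f(\mu_f+\nu_f)H_{total}^*}, \qquad F_{wi}^* = \frac{\alpha_{fw} p_{hf}\nu_{fw} F_{ws}^* H_i^*}{\mu_{fw}(\mu_{fw}+\nu_{fw})H_{total}^*}.
\]
The steady-state $h_i^*$ equation is a first-order linear ODE in age with boundary condition $h_i^*(0)=0$; solving it by the characteristic method that produced \eqref{040915_3} and integrating over $a$ yields
\[
H_i^* = \int_0^\infty\!\!\int_0^a e^{-\int_{\bar a}^a(\mu_{hi}+\rho_h)\,d\eta}\,\alpha(\bar a)\,p_{fh}\bigl(\alpha_f F_i^* + \varepsilon\alpha_{fw} F_{wi}^*\bigr)\frac{h_s^*(\bar a)}{H_{total}^*}\,d\bar a\,da.
\]
Substituting the expressions above for $F_i^*$ and $F_{wi}^*$ factors $H_i^*$ out of the right-hand side. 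Since $H_i^*>0$ I divide by it, obtaining the self-consistency relation $1 = Q$, where $Q$ is structurally identical to the right-hand side of \eqref{160216_1} but with $F_s^*, F_{ws}^*, H^*, h_s^*$ interpreted as the endemic-state values.

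To close the argument I need $Q \leq \mathcal{R}_0$, which I establish through three comparisons. First, the Wolbachia-sex subsystem \eqref{Wolbachia-sex} decouples from dengue---summing the $F_s, F_e, F_i$ equations of \eqref{F} recovers the $F$ equation of \eqref{Wolbachia-sex}, and analogously for $F_w$---so the total female mosquito populations coincide at the dengue-free and endemic equilibria, yielding $F_s^* \leq F^*$ and $F_{ws}^* \leq F_w^*$. Second, the steady-state $h_s^*$ ODE at the endemic state carries an additional nonnegative infection-loss term, so $h_s^*(a) \leq h_s^*(0)\exp(-\int_0^a \mu_{hs}(\eta)\,d\eta)$. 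Third, by the net-reproduction identity \eqref{PDE-steady2} the dengue-free profile is determined only up to an overall scalar, which I fix by matching $h_{s,\mathrm{DF}}^*(0)$ to the endemic $h_s^*(0)$. Assembling these bounds makes each factor in $Q$ dominated by the corresponding factor in \eqref{160216_1}, giving $Q \leq \mathcal{R}_0$; together with $Q=1$ and $\mathcal{R}_0 < 1$ this is the desired contradiction, so no steady state with $H_i^*>0$ exists.

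The delicate step is the third comparison, specifically controlling the population-size factor $h_s^*(0)/H_{total}^{*\,2}$ in $Q$ against $h_{s,\mathrm{DF}}^*(0)/H_\mathrm{DF}^{*\,2}$ in $\mathcal{R}_0$. Disease-induced mortality can depress $H_{total}^*$ below $H_\mathrm{DF}^*$, and the boundary condition $h_s(t,0) = \int_0^\infty \beta_h(a)(h_s + h_r)\,da$ couples the endemic newborn flux to both $h_s^*$ and $h_r^*$, so the inequality does not follow from pointwise monotonicity alone. The main technical work is therefore to track how $h_s^*(0)$, $H_{total}^*$ and the age profile $h_s^*(a)$ co-vary along the endemic branch, exploiting the scaling freedom in the dengue-free state and the nonnegativity of $h_i^*, h_r^*$ to secure the required ratio comparison.
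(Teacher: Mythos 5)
Your first step coincides with the paper's: set the time derivatives to zero, solve the linear-in-$h_i$ age equation along characteristics, express $F_i^*$ and $F_{wi}^*$ from the exposed/infectious mosquito equations, substitute, and cancel $H_i^*>0$ to obtain a self-consistency relation. The paper stops there, reading the combined relation directly as $\mathcal{R}_0=1$ (i.e.\ it treats the starred quantities in the steady-state relations as the same ones entering \eqref{160216_1}), whereas you, more scrupulously, distinguish endemic-state from dengue-free values and therefore need the additional inequality $Q\le\mathcal{R}_0$. That inequality is exactly where your argument has a genuine gap: you first assert that ``assembling these bounds makes each factor in $Q$ dominated by the corresponding factor in \eqref{160216_1}'', and then concede in your final paragraph that the population-size factor is not controlled -- and indeed it is not. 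With your normalisation $h_{s,\mathrm{DF}}^*(0)=h_s^*(0)$, the pointwise bound $h_s^*(a)\le h_s^*(0)\exp\left(-\int_0^a\mu_{hs}(\eta)\,\ud\eta\right)$ helps the numerator, but the same mechanism (infection losses and excess mortality of infecteds) also depresses the endemic total population, so $H_{\rm total}^*\le H_{\mathrm{DF}}^*$ and the factor $1/H_{\rm total}^{*\,2}$ in $Q$ moves in the \emph{wrong} direction; the two effects compete and no domination follows from the bounds you list. Matching total populations instead of newborn fluxes transfers the problem to the $h_s^*(0)$ factor, with the same sign difficulty. So the decisive step of your plan is missing, not merely ``technical''.

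Two further points. Your mosquito comparison $F_s^*\le F^*$, $F_{ws}^*\le F_w^*$ tacitly assumes that the endemic state's total female populations coincide with those of the particular dengue-free equilibrium used in \eqref{160216_1}; summing the compartments does show the totals solve \eqref{Wolbachia-sex} at steady state, but since \eqref{Wolbachia-sex} may admit several coexistence equilibria (the paper stresses possible bistability), the endemic state could sit over a \emph{different} mosquito equilibrium, and the comparison with that $\mathcal{R}_0$ then needs an explicit matching hypothesis. Also, the monotonicity of $h_{\rm total}$ in age that you would need to compare $H_{\rm total}^*$ with the dengue-free population uses assumptions of the type \eqref{mu-assumption}, which belong to Theorem \ref{theorem2} and are not hypotheses of this proposition. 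In short: your proposal correctly identifies the issue the paper's terse proof glosses over, but as written it is a programme whose key inequality $Q\le\mathcal{R}_0$ is unproved (and, as stated, not obtainable from the listed pointwise bounds), so it does not yet establish the proposition.
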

\begin{proof}
From equations \eqref{040915_3}, \eqref{eq-compartments} we note that  any steady state with $H_i^*>0$ necessarily satisfies the following equations.
\begin{equation}\label{end-steadyeq}
\begin{aligned}
H_i^* & =   \int_0^{\infty}\int_0^a\exp\left\{-\int_{\bar{a}}^a(\mu_{hi}(\eta)+\rho_h(\eta))\,\ud\eta\right\}  \left[\alpha_f\alpha(\bar{a})\,p_{fh}\frac{h_s^*(\bar{a})}{H^*}F_i^*+\varepsilon\,\alpha_{fw}\alpha(\bar{a})\,p_{fh}\frac{h_s^*(\bar{a})}{H^*}F_{wi}^*\right]\,\ud\bar{a}\,\ud a, \\
 \alpha_f\,p_{hf}F_s^*\frac{H_i^*}{H^*} & =  \frac{\mu_f(\mu_f+\nu_f)}{\nu_f}F_i^*,  \\
 \alpha_{fw}\,p_{hf}F_{ws}^*\frac{H_i^*}{H^*} & =  \frac{\mu_{fw}(\mu_{fw}+\nu_{fw})}{\nu_{fw}}F_{wi}^*.
\end{aligned}
\end{equation}
Combining equations \eqref{end-steadyeq} and recalling formula \eqref{160216_1} we conclude that $H_i^*>0$ implies $\mathcal{R}_0=1$, a contradiction.
\end{proof}

The final result we establish characterizes important global properties of our model \eqref{F}. The proof of the result below is contained in Appendix B.

\begin{theorem}\label{theorem2}
Assume that 
\begin{equation}\label{mu-assumption}
\mu_{hi}(a) = \mu_{hs}(a) + \mu_i(a),\quad \mu_{hr}(a) = \mu_{hs}(a), 
\end{equation}
and that there exists some $\mu_{hs}^*>0$, such that
\begin{equation}\label{mu-bounded}
0<\displaystyle\inf_{a\in [0,\infty)}\{\mu_i(a)\}\leq\sup_{a\in [0,\infty)}\{\mu_i(a)\}<\infty,\,\, \mu_{hs}^*\le \displaystyle\inf_{a\in [0,\infty)}\{\mu_{hs}(a)\};\,\,\, \text{and that}  \,\,\, 0<\beta_h(a)\le \hat{\beta},\,\,\, \forall a\in [0,\infty). 
\end{equation}
Moreover, assume that
\begin{equation}\label{con_global}
\int_0^{\infty} \beta_h(a) \exp\left\{-\int_0^a \mu_{hs}(\eta)\,\ud\eta\right\} \ud a= 1.
\end{equation}
Then $\forall\, a\in [0,\infty)$ we have that $h_i(t,a)\to 0$ as $t\to\infty$.
Moreover, we have
\begin{equation}\label{2608164}
\limsup_{t\rightarrow \infty} \int_0^\infty \left(h_s(t,a) + h_i(t,a) + h_r(t,a)\right)\ud a :=\limsup_{t\rightarrow \infty} H_{\rm total}(t)  \leq  \frac{ H_{total}(0)\hat{\beta} \displaystyle\int_0^{\infty} 
  \exp\left\{ - \int_0^a \mu_{hs}(\eta) \,\ud\eta \right\} \ud a  }{\mu_{hs}^* \displaystyle\int_0^{\infty} a \beta_h(a) \exp\left\{ -\int_0^a \mu_{hs}(\eta) \, \ud\eta \right\} \,\ud a}. 
\end{equation}
\end{theorem}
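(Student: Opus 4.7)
The natural approach is to work throughout with the total human age-density $N(t,a) := h_s(t,a) + h_i(t,a) + h_r(t,a)$. Summing the three human PDEs in \eqref{F} and invoking the decomposition $\mu_{hi} = \mu_{hs} + \mu_i$, $\mu_{hr} = \mu_{hs}$ from \eqref{mu-assumption}, the transfer terms between $h_s$, $h_i$ and $h_r$ cancel pairwise, leaving the McKendrick-type inequality
\begin{equation*}
\partial_t N + \partial_a N \;=\; -\mu_{hs}(a) N(t,a) - \mu_i(a) h_i(t,a) \;\le\; -\mu_{hs}(a) N(t,a),
\end{equation*}
together with the boundary value $B(t):=N(t,0) = \int_0^\infty \beta_h(a)(h_s+h_r)(t,a)\,\ud a \le \hat{\beta}\, H_{\rm total}(t)$. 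This single inequality (together with a Lyapunov identity extracted from the critical renewal condition) will drive both conclusions.

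For the bound \eqref{2608164}, set $\pi(a):=\exp\bigl\{-\int_0^a \mu_{hs}(\eta)\,\ud\eta\bigr\}$ and integrate the above inequality along characteristics to obtain $N(t,a)\le B(t-a)\pi(a)$ for $t>a$ and $N(t,a)\le N(0,a-t)\pi(a)/\pi(a-t)$ for $t\le a$. Substituting these into $B(t)\le \int_0^\infty \beta_h(a) N(t,a)\,\ud a$ yields the Volterra inequality
\begin{equation*}
B(t) \;\le\; (K*B)(t) + g(t), \qquad K(a):=\beta_h(a)\pi(a),
\end{equation*}
where $\int_0^\infty K\,\ud a = 1$ by the renewal hypothesis \eqref{con_global}, while the bound $\mu_{hs}\ge \mu_{hs}^*$ from \eqref{mu-bounded} gives $g(t)\le \hat{\beta}\, H_{\rm total}(0)\,e^{-\mu_{hs}^* t}$ and hence $\|g\|_{L^1}\le \hat{\beta}\, H_{\rm total}(0)/\mu_{hs}^*$. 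Because $K$ is a proper probability density of finite mean and $g\in L^1$, the Key Renewal Theorem applied to the companion equation $\tilde B = K*\tilde B + g$, combined with a monotone comparison $B\le \tilde B$, produces
\begin{equation*}
\limsup_{t\to\infty}B(t) \;\le\; \frac{\|g\|_{L^1}}{\int_0^\infty a K(a)\,\ud a} \;\le\; \frac{\hat{\beta}\, H_{\rm total}(0)}{\mu_{hs}^*\int_0^\infty a\beta_h(a)\pi(a)\,\ud a}.
\end{equation*}
Integrating the characteristic bound for $N$ over $a\ge 0$ and passing to $\limsup$ then delivers \eqref{2608164}.

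For the pointwise decay $h_i(t,a)\to 0$, the central new tool is the reproductive-value weight $\phi(a)$ solving $\phi'(a) = -\beta_h(a) + \phi(a)\mu_{hs}(a)$ with $\phi(0)=1$; the renewal hypothesis \eqref{con_global} makes this choice consistent with $\phi(\infty)=0$, and one checks $0<\phi\le \hat{\beta}/\mu_{hs}^*$. Multiplying the PDE for $N$ by $\phi$, integrating by parts in $a$, and using $\phi(0)=1$ to collapse the reproductive boundary term, the fertility contributions telescope exactly and produce the identity
\begin{equation*}
\frac{d}{dt}\int_0^\infty\phi(a) N(t,a)\,\ud a \;=\; -\int_0^\infty h_i(t,a)\bigl(\beta_h(a)+\phi(a)\mu_i(a)\bigr)\,\ud a \;\le\; 0.
\end{equation*}
Hence $V(t):=\int_0^\infty\phi N\,\ud a$ is non-increasing and non-negative, so $\int_0^\infty\!\int_0^\infty h_i(t,a)(\beta_h(a)+\phi(a)\mu_i(a))\,\ud a\,\ud t<\infty$; boundedness of all state variables (Theorem \ref{base-model-th} together with Part 1) makes this integrand uniformly continuous in $t$, and Barbalat's lemma yields $\int_0^\infty h_i(t,a)(\beta_h+\phi\mu_i)\,\ud a\to 0$. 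The main obstacle is then the upgrade from this weighted-integral convergence to pointwise $h_i(t,a)\to 0$: I would feed the decay back into the $F_e,F_i,F_{we},F_{wi}$ ODEs (whose forcing via $H_i$ is controlled by the integral of $h_i$ against $\beta_h$ over the reproductive age window) to conclude $F_i,F_{wi}\to 0$, and then invoke dominated convergence on the characteristic representation of $h_i$, which for $t>a$ writes $h_i(t,a)$ as an $a$-integral of the mosquito force of infection. Making this rigorous also requires a uniform positive lower bound on $H_{\rm total}(t)$, which I would obtain by running the renewal argument of Part~1 in the reverse direction starting from $B(t)\ge \int_0^\infty \beta_h(a) h_s(t,a)\,\ud a$ together with a lower characteristic estimate for $h_s$.
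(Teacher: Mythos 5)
Your handling of the bound \eqref{2608164} is, in substance, the paper's own argument: you reduce the total human density to a renewal inequality for the birth rate $B(t)=h_{\rm total}(t,0)$ with kernel $K(a)=\beta_h(a)\exp\{-\int_0^a\mu_{hs}(\eta)\,\mathrm{d}\eta\}$ of unit mass by \eqref{con_global}, bound the inhomogeneity by $\hat{\beta}H_{total}(0)e^{-\mu_{hs}^*t}$ using $\mu_{hs}\ge\mu_{hs}^*$, and evaluate the limit dictated by the critical root; the paper does this via Laplace inversion and the residue at $s=0$, you via the Key Renewal Theorem, which is the same computation (you should still justify passing $\limsup$ inside $\int_0^t B(t-a)\exp\{-\int_0^a\mu_{hs}\}\,\mathrm{d}a$, e.g.\ boundedness of $B$ plus Fatou, as the paper does). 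Dropping the nonnegative term $G(t)$ from the outset is legitimate for this half and even slightly cleaner, since the paper's bound also discards $\hat{G}(0)$ at the end.

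The genuine gap is in the first assertion, $h_i(t,a)\to 0$. Your reproductive-value identity is correct and yields $\int_0^\infty w(t)\,\mathrm{d}t<\infty$ with $w(t)=\int_0^\infty h_i(t,a)\bigl(\beta_h(a)+\phi(a)\mu_i(a)\bigr)\,\mathrm{d}a$ (the analogue of the paper's conclusion $\int_0^\infty G(t)\,\mathrm{d}t<\infty$), but the proposed upgrade to pointwise decay does not go through as sketched. The mosquito compartments are forced by the unweighted total $H_i(t)=\int_0^\infty h_i(t,a)\,\mathrm{d}a$, whereas your weight $\beta_h(a)+\phi(a)\mu_i(a)$ is only positive pointwise: the hypotheses allow $\beta_h(a)\to 0$ as $a\to\infty$, and then $\phi(a)=\int_a^\infty\beta_h(s)\exp\{-\int_a^s\mu_{hs}\}\,\mathrm{d}s\to 0$ as well, so $w(t)\to 0$ does not control $H_i(t)$ and $F_i,F_{wi}\to 0$ does not follow; in addition, the uniform positive lower bound on $H_{\rm total}$ you invoke is unproven (and delicate at criticality with the extra mortality $\mu_i$), and the uniform continuity needed for Barbalat is only asserted. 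The paper circumvents all of this and never touches the mosquito equations for this part: it keeps the term $G(t)$ in the renewal inequality, so that criticality forces $\int_0^\infty G(t)\,\mathrm{d}t<\infty$ and $G(t)\to 0$; since, for each fixed $a$, $G(t)$ averages $h_i$ along characteristics against the weights $\mu_i(\xi)\exp\{\int_0^\xi\mu_{hs}\}$, which are bounded below on every compact age interval, pointwise decay of $h_i(t,a)$ is read off directly by taking $\xi=a$, with no detour through $F_i,F_{wi}$ or a lower bound on $H_{\rm total}$. To salvage your route you would either have to close it in that fashion, or impose an extra hypothesis of the form $\inf_a\bigl(\beta_h(a)+\phi(a)\mu_i(a)\bigr)>0$, which the theorem does not assume.
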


To interpret the results above first note that condition \eqref{con_global} says that the so-called net reproduction number of the susceptible human population is $1$. The net reproduction number is the average of the number of offspring individuals produce during their lifetime.  This quantity (albeit note the obvious similarity) is not to be confused with the basic reproductive number, which characterizes the number of secondary cases a single infected individual causes in a completely susceptible population. Condition \eqref{con_global} simply implies that the susceptible human population is at a steady state; and this is a reasonable assumption, as human population dynamics and dengue disease dynamics take place on different time-scales. We also naturally assumed in \eqref{mu-assumption} that susceptible and recovered human mortality is the same, while infected human mortality is larger than that of susceptibles. The boundedness conditions on the fertility and mortality rates in \eqref{mu-bounded} are technical and they are realistic.

Under these hypotheses \eqref{2608164} establishes a bound for the size of the total human population, and since \eqref{con_global} implies that the susceptible population size does not change (in particular does not grow) due to population dynamics, the upper bound on the total human population size in turn gives an upper bound for the infected human population size; while note that our assumptions do not imply that $\mathcal{R}_0$ is small.

\section{Discussion}

In this work we introduced a coupled system of ordinary and partial differential equations to model dengue infection dynamics in an age-structured human population. To take into account the potential effects of a dengue inhibiting \Wolb strain we used a detailed sex-structured mosquito population model, which accounts for most of the known mechanisms by which \Wolb interferes with its host's reproductive mechanisms. Moreover, our mosquito population model allows to take into account the possible life shortening effect of the dengue inhibiting \Wolb strain, by allowing for a reduction of the reproductive output of \Wolb infected females. The age-structure of the human population clearly impacts dengue infection dynamics for a number of reasons: 
\begin{enumerate}
\item Recovery and death rates of dengue infected humans vary by age. 
\item Older people have more likely to have acquired immunity against specific serotypes of dengue. 
\item Biting rates of mosquitoes also  vary by human age. 
\end{enumerate}

Our  model \eqref{F} allows to study all of these effects and their combined impact on dengue disease dynamics. Indeed, the analytic formula we deduced here for $\mathcal{R}_0$ demonstrates the intricate relationship between a dengue inhibiting \Wolb infection in the mosquito population, and the demographic profile of the human population. In particular note that (as expected) $\mathcal{R}_0$ is a monotone decreasing function of the parameter $\varepsilon$, which measures the strength of the inhibition of dengue replication in \Wolb infected mosquitoes. This would imply in principle that \Wolb induced dengue inhibition in mosquitoes may indeed reduce the severity of dengue outbreaks. On the other hand, the formula we deduced for $\mathcal{R}_0$ also shows that if a new strain of dengue inhibiting \Wolb perturbs the mosquito population into another steady state (note that our model can have multiple co-existence steady states in some parameter regimes), in which an existing/resident strain will dominate (corresponding to the \Wolb uninfected compartment in our model), that is a steady state for which $F_s^*\gg F_{ws}^*$ holds, than this would actually limit the effectiveness of a dengue inhibiting \Wolb release on reducing $\mathcal{R}_0$. Hence studying the possible  coexistence between dengue inhibiting \Wolb strains and resident \Wolb strains in specific geographic locations will be key. Indeed, there is a lot of recent work, see e.g. \cite{Nadin,Turelli, Zhang, Zheng} on modelling the spatial spread of mosquitoes following the  release of a new \Wolb strain. It would be indeed interesting to expand our sex-structured mosquito population model into a spatially structured one, to investigate whether co-existence and  bi-stability remains a possibility in some parameter regimes.

We also note that the formula we deduced for $\mathcal{R}_0$ shows the impact of the acquisition of age-related immunity. In particular we can see from formula \eqref{160216_1} that a decrease of $p_{fh}\,\alpha(\bar{a})$ for large values of $\bar{a}$, corresponding to increased probability of dengue immunity in the elderly, has a negligible effect on the value of $\mathcal{R}_0$, in comparison to a similar decrease of  $p_{fh}\,\alpha(\bar{a})$ for smaller values of $\bar{a}$. This then also implies that future vaccination efforts should be focused primarily on the younger age-groups to maximise impact; and at the same time underlines why dengue continues to be a major challenge in populations with specific demographics. 

Finally we note that the rigorous analysis of infinite dimensional dynamical systems such as \eqref{F} is challenging for a number reasons. We only highlight here one issue, that is the  transitioning between compartments, which is typical for example in models of infectious diseases or cell populations. To obtain qualitative results even for linear or finite dimensional models is non-trivial due to the fact that transitioning between compartments makes it more difficult to exploit positivity (or monotonicity) properties of such systems, see e.g. \cite{FHin,Hadeler2008,Smith95}.

\appendix

\section{Proof of Theorem 3.1}

The characteristic equation associated with the
linearisation about a dengue-free equilibrium is given by \eqref{040915_5}. We are going to prove that all of its roots satisfy ${\rm Re}(\lambda)<0$. Let us
restrict our attention first to possible real roots. Note that
$\Phi(\lambda)$ is a decreasing function of $\lambda$ and therefore
each of the two bracketed terms in the left hand side
of~(\ref{040915_5}) is increasing in $\lambda$, at least for
$\lambda\geq 0$. The right hand side of~(\ref{040915_5}) decreases in
$\lambda$, at least for $\lambda\geq 0$. It follows that if the left
hand side exceeds the right hand side when $\lambda=0$, then all the
real roots of the characteristic equation \eqref{040915_5} are
strictly negative. Using the
expression for $h_s^*(a)$ in~(\ref{040915_1}), and
expression~(\ref{040915_4}), this criterion can be rewritten in the
form $\mathcal{R}_0<1$, where $\mathcal{R}_0$ is given by~(\ref{160216_1}). 

Next we deal with possible complex roots. More precisely, we show that if
the dominant real root of the characteristic equation~(\ref{040915_5})
is negative, then any complex root must have negative real part, too. To see
this, note that the characteristic equation \eqref{040915_5} can be rewritten in the form
\begin{equation}\label{160216_2}
1 =\varepsilon\,\alpha_{fw}^2p_{hf}p_{fh}\frac{F_{ws}^*}{H^{*^2}}\nu_{fw}\frac{\Phi(\lambda)}{(\lambda+\mu_{fw})(\lambda+\mu_{fw}+\nu_{fw})}
 +\alpha_{f}^2p_{hf}p_{fh}\frac{F_{s}^*}{H^{*^2}}\nu_{f}\frac{\Phi(\lambda)}{(\lambda+\mu_{f})(\lambda+\mu_{f}+\nu_{f})}
=:\Psi(\lambda).
\end{equation}
Note that $\Psi(\lambda)$ is decreasing as a function of $\lambda$, for $\lambda$
real and positive. By hypothesis, $\mathcal{R}_0<1$ so that the dominant 
real root of \eqref{040915_5} is negative. We show that under these
circumstances there are no complex roots of \eqref{040915_5} with
${\rm Re}(\lambda)\geq 0$. To see this, note that $\mathcal{R}_0<1$ if and only
if $\Psi(0)<1$. From the definition of $\Phi(\lambda)$ (see
expression \eqref{040915_4}), it is easily seen that
$|\Phi(\lambda)|\leq \Phi({\rm Re}(\lambda))$. Furthermore, 
if ${\rm Re}(\lambda)\geq 0$ then $|\lambda+\mu_f|\geq {\rm
  Re}(\lambda)+\mu_f$, with similar estimates for the other factors in
the denominators of expression \eqref{160216_2}. Therefore, any complex
root $\lambda$ of the characteristic equation \eqref{040915_5} with 
${\rm Re}(\lambda)\geq 0$ must satisfy
\begin{equation*}
\begin{aligned}
1 =|\Psi(\lambda)|\leq &
\,\,\varepsilon\,\alpha_{fw}^2p_{hf}p_{fh}\frac{F_{ws}^*}{H^{*^2}}\nu_{fw}\frac{\Phi({\rm
    Re}(\lambda))}{({\rm Re}(\lambda)+\mu_{fw})({\rm Re}(\lambda)+\mu_{fw}+\nu_{fw})} \\
&  +\alpha_{f}^2p_{hf}p_{fh}\frac{F_{s}^*}{H^{*^2}}\nu_{f}\frac{\Phi({\rm
     Re}(\lambda))}{({\rm Re}\,\lambda+\mu_{f})({\rm Re}(\lambda)+\mu_{f}+\nu_{f})}
:=\Psi({\rm Re}(\lambda)).
\end{aligned}
\end{equation*}
Since $\Psi(\lambda)$, restricted to real $\lambda$, is decreasing for $\lambda\geq 0$, and since $\Psi(0)<1$ and $\Psi({\rm Re}(\lambda))\geq 1$ (see above), it follows that ${\rm
  Re}(\lambda)<0$, giving a contradiction. \qed

\section{Proof of Theorem 3.3}

We introduce $h_{\rm{total}}(t,a) := h_s(t,a) + h_i(t,a) + h_r(t,a)$ and note from \eqref{F} that it satisfies the boundary value problem:
\begin{equation}\label{2508161}
\begin{aligned}
\frac{\partial h_{\rm{total}}(t,a)}{\partial t} + \frac{\partial h_{\rm{total}}(t,a)}{\partial a}    & = - \mu_{hs}(a)h_{\rm{total}}(t,a) - \mu_i(a) h_i(t,a),  \\
 h_{\rm{total}}(t,0) &= \int_0^{\infty}  \beta_h(a) (h_s(t,a) + h_r(t,a)) \, \ud a .
\end{aligned}
\end{equation}
Therefore we have
\begin{equation}\label{2508162}
h_{\rm{total}}(t,0) \leq \int_0^{\infty}  \beta_h(a) h_{\rm{total}}(t,a) \, \ud a = \int_0^t \beta_h(a) h_{\rm{total}}(t,a) \, \ud a + \int_t^{\infty}  \beta_h(a) h_{\rm{total}}(t,a) \, \ud a.
\end{equation}
Introducing 
\begin{equation*}
N(t):= h_{\rm{total}}(t,0), \quad N_0(t):=\int_t^{\infty}  \beta_h(a) h_{\rm{total}}(t,a) \, \ud a,
\end{equation*}
inequality \eqref{2508162} can be written in the form 
\begin{equation}\label{2508163}
N(t) \leq \int_0^t \beta_h(a) h_{\rm{total}}(t,a) \, \ud a + N_0(t).
\end{equation}
Note that for large times the first term in \eqref{2508163} dominates.  The second term $N_0(t)$ in \eqref{2508163} depends on the initial age distribution $h_{\rm{total}}(0,a-t)$ and loses its influence as $t\rightarrow \infty$.

Next we deduce an expression for $h_{\rm{total}}(t,a)$ valid for $t \geq a$. Introducing
$$
h^{\xi}_{\rm{total}}(a) = h_{\rm{total}}(a + \xi,a),
$$
the first equation in \eqref{2508161} can be written in the form
$$
\frac{d h^{\xi}_{\rm{total}}(a)}{d a} + \mu_{hs}(a) h^{\xi}_{\rm{total}}(a) = - \mu_i(a) h_i(a + \xi, a).
$$
Solving this ODE for $h^{\xi}_{\rm{total}}(a)$ and then putting $\xi = t-a$, gives 
\begin{equation}
h_{\rm{total}}(t,a) =  \, h_{\rm{total}}(t-a,0) \exp\left\{-\int_0^a \mu_{hs}(\eta) \, \ud\eta \right\}- \int_0^a \mu_i(\xi) \exp\left\{ -\int_{\xi}^a  \mu_{hs}(\eta) \, \ud\eta \right\} h_i(\xi+t-a,\xi) \, \ud\xi,
 \end{equation}
for $t>a$. Replacing $h_{\rm{total}}(t-a,0)$ by $N(t-a)$, inequality \eqref{2508163} becomes 
\begin{equation}\label{2508164}
N(t) \leq \int_0^t  f(a)N(t-a)   \,da -G(t) + N_0(t),
\end{equation}
where 
\begin{equation}\label{2508165}
f(a) = \beta_h(a) \exp\left\{- \int_0^a \mu_{hs} (\eta) \, \ud\eta \right\},
\end{equation}
and $G(t)$ is defined as
\begin{equation}\label{Gt}
G(t) = \int_0^t \beta_h(a) \int_0^a \mu_i(\xi) \exp\left\{ -\int_{\xi}^a \mu_{hs} (\eta) \,\ud\eta \right\} h_i(\xi + t -a, \xi) \,\ud \xi \,\ud a. 
\end{equation}

By a comparison argument, using that $f(a) \geq 0$, we may say that $N(t) \leq \bar{N}(t)$, where $\bar{N}(t)$ satisfies the integral equation 
\begin{equation}\label{2508166}
\bar{N}(t) = \int_0^t  f(a)\bar{N}(t-a)   \,\ud a -G(t) + N_0(t).
\end{equation}
Applying the Laplace transform, and using the convolution theorem we have 
\begin{equation}\label{laplace}
\hat{N}(s) = \hat{f}(s) \hat{N}(s) + \hat{N}_0(s) - \hat{G}(s),
\end{equation}
where $s$ is the transform variable;  $\hat{N}(s)$ is the Laplace transform of $\bar{N}(t)$, $\hat{G}(s)$ is the Laplace transform of $G(t)$, and $\hat{N}_0(s) $ is the Laplace transform of $N_0(t)$. Note that $\hat{N}_0(s)$ satisfies 
\begin{equation}\label{laplace-est}
\hat{N}_0(s)\le \int_0^\infty e^{-st}\int_t^\infty \beta_h(a)h_{\rm{total}}(0,a-t)\exp\left\{-\int_{a-t}^a\mu_{hs}^*\,dx\right\}\, \ud a\, \ud t\le \frac{\hat{\beta}H_{\rm{total}}(0)}{\mu_{hs}^*+s}, \quad s>-\mu_{hs}^*.
\end{equation} 
Solving \eqref{laplace} for $\hat{N}(s)$, then applying the inversion formula for the Laplace transform yields 
\begin{equation}\label{2508167}
\bar{N}(t) = \frac{1}{2\pi i} \int_{\sigma - i \infty}^{\sigma + i \infty} e^{st} \frac{\hat{N}_0(s) - \hat{G}(s)}{1 - \hat{f}(s)} \,\ud s
\end{equation}
where $\sigma$ is any real number that strictly exceeds the real parts of all the poles of integrand.
These poles  are the zeros of 
\begin{equation*}
1 = \hat{f}(s) = \int_0^{\infty} e^{-sa} \beta_h(a) \exp\left\{ - \int_0^a \mu_{hs}(\eta)\,\ud\eta \right\} \,\ud a.
\end{equation*}
By  assumption \eqref{con_global},  the poles are $s=0$ and (possibly) infinitely many other poles all of which satisfy ${\rm Re}(s) < 0$. The residues at these other poles all decay exponentially with $t$, so we focus on the pole at $s=0$ and compute its residue using a standard formula. In particular, by Cauchy's residue theorem, for large $t$,
\begin{equation*}
\bar{N}(t) \simeq {\rm Res} \left[\frac{e^{st} ( \hat{N}_0(s) - \hat{G}(s))}{1-\hat{f}(s)},\, s=0 \right] = \frac{\hat{N}_0(0) - \hat{G}(0)}{-\hat{f}^{\prime}(0)}.
\end{equation*}
But 
\begin{equation*}
\hat{f}^{\prime}(s) = -\int_0^{\infty} a e^{-sa} \beta_h(a) \exp\left\{ -\int_0^a \mu_{hs}(\eta) \, \ud\eta \right\} \, \ud a,
\end{equation*}
so we have
\begin{equation*}
\hat{f}^{\prime}(0) = -\int_0^{\infty} a \beta_h(a) \exp\left\{ -\int_0^a \mu_{hs}(\eta) \, \ud\eta \right\} \, \ud a.
\end{equation*}
Since $N(t) \leq \bar{N}(t)$, 
\begin{equation}\label{2608163b}
\limsup_{t\rightarrow \infty} N(t) \leq \lim_{t\rightarrow \infty} \bar{N}(t) = \frac{\hat{N}_0(0) - \hat{G}(0)}{\displaystyle\int_0^{\infty} a \beta_h(a) \exp\left\{ -\int_0^a \mu_{hs}(\eta) \, \ud\eta \right\} \,\ud a}.
\end{equation}
Note that  $\displaystyle \hat{G}(0) = \int_0^{\infty} G(t)\, \ud t$, so the only way to have $\displaystyle \limsup_{t\rightarrow \infty} N(t) \geq 0$, is to have 
\begin{equation*}
\hat{G}(0) = \int_0^{\infty} G(t)\,\ud t < \infty.
\end{equation*}
Thus, $G(t)\rightarrow 0$ as $t\rightarrow\infty$. To establish the first assertion of Theorem 3.3 we rewrite $G(t)$ as
\begin{equation}\label{Gt-2}
G(t)=\int_0^t \beta_h(a)\exp\left\{-\int_0^a\mu_{hs}(\eta)\,\ud \eta\right\} \int_0^a \mu_i(\xi) \exp\left\{ \int_0^{\xi} \mu_{hs} (\eta) \,\ud\eta \right\} h_i(\xi + t -a, \xi) \,\ud \xi \,\ud a. 
\end{equation}
Because of our assumption \eqref{con_global} and since $\beta_h>0$ from \eqref{mu-bounded}, $G(t)\to 0$ as $t\to\infty$ implies that 
$$\int_0^a \mu_i(\xi) \exp\left\{ \int_0^{\xi} \mu_{hs} (\eta) \,\ud\eta \right\} h_i(\xi + t -a, \xi) \,\ud \xi\to 0,\quad\forall\,a\in [0,\infty),$$
and since the integrand is positive and the integral as a function of $a$ is continuous, we have that $h_i(\xi + t -a, \xi)\to 0,\,\,\forall\,\xi,\,a$ as $t\to \infty$, hence taking $\xi=a$ yields $h_i(t,a)\to 0$ as $t\to\infty$.

Next note that $G(t)$, defined in \eqref{Gt}, satisfies

\begin{equation*}
\begin{aligned}
G(t)   \geq & \int_0^t  \int_0^a \mu_i(\xi) \beta_h(a) \exp\left\{ -\int_0^a \mu_{hs} (\eta) \,\ud\eta \right\} h_i(\xi + t -a, \xi) \,\ud \xi \,\ud a\\
  = &  \int_0^t \int_0^a \mu_i(\xi) f(a) h_i(\xi+t-a,\xi) \,\ud\xi \, \ud a \\
\geq & \inf_{\xi\in [0,\infty)}\{ \mu_i(\xi)\} \int_0^t f(a) \int_0^a h_i(\xi+t-a,\xi) \,\ud\xi \,\ud a, 
\end{aligned}
\end{equation*}
so that $G(t)\geq \displaystyle\inf_{\xi\in [0,\infty)}\{ \mu_i(\xi)\}\, \tilde{G}(t)$, where
\begin{equation*}
\tilde{G}(t)= \int_0^t f(a) \int_0^a h_i(\xi+t-a,\xi) \,\ud\xi \,\ud a.
\end{equation*}
Thus $\tilde{G}(t) \rightarrow 0$ as $t\rightarrow\infty$. Next note that, for any finite $T>0$,
\begin{equation*}
\tilde{G}_T(t):= \int_0^T f(a) \int_0^a h_i(\xi+t-a,\xi) \,\ud\xi \,\ud a\leq \tilde{G}(t),
\end{equation*}
for all $t\geq T$, and thus we also have $\tilde{G}_T(t) \rightarrow 0$ as $t\rightarrow\infty$. Thus
\begin{equation*}
0=\lim_{t\rightarrow \infty} \tilde{G}_T(t) = \int_0^T f(a)  \lim_{t\rightarrow \infty} \int_0^a h_i(\xi+t-a,\xi) \,\ud\xi \,\ud a,
\end{equation*}
by the dominated convergence theorem. Since $f(a)>0$ for all $a\geq 0$, we have
\begin{equation}\label{2608162}
\lim_{t\rightarrow \infty} \int_0^a h_i(\xi+t-a,\xi) \,\ud\xi = 0,
\end{equation}
for each fixed $a$. Note that 
\begin{equation}\label{2608161}
\begin{aligned}
H_{total}(t)  = & \int_0^\infty h_{\rm{total}}(t,a)\, \ud a=\int_0^\infty \left(h_s(t,a) + h_i(t,a) + h_r(t,a)\right)\ud a \\
= &  \int_0^{\infty} N(t-a) \exp\left\{ - \int_0^a \mu_{hs}(\eta) \,d\eta \right\} \ud a
 - \int_0^{\infty} \int_0^a \mu_i(\xi)  \exp\left\{ - \int_{\xi}^a \mu_{hs}(\eta) \,\ud\eta \right\} h_i(\xi+t-a,\xi) \, \ud\xi \, \ud a.
 \end{aligned}
\end{equation}
The last term in \eqref{2608161} is bounded by
\begin{equation*}
\displaystyle\sup_{\xi\in [0,\infty)}\{\mu_i(\xi)\} \int_0^{\infty} \int_0^a   h_i(\xi+t-a,\xi) \, \ud\xi \, \ud a,
\end{equation*}
which, by the dominated convergence theorem, tends to zero as $t\rightarrow\infty$, by \eqref{2608162}. Therefore, for any given $\epsilon>0$ the inequality 
\begin{equation*}
\int_0^{\infty}  N(t-a)  \exp\left\{  - \int_0^a \mu_{hs}(\eta) \,\ud\eta \right\} \ud a - \epsilon 
 \leq H_{total}(t) \leq \int_0^{\infty}  N(t-a)  \exp\left\{ - \int_0^a \mu_{hs}(\eta) \,\ud\eta \right\} \ud a
\end{equation*}
holds, for sufficiently large $t$. A consequence of this inequality is that, using \eqref{2608163b} and Fatou's lemma,
\begin{equation*}
\begin{aligned}
\limsup_{t\rightarrow \infty} H_{total}(t)  & \leq  \int_0^{\infty} \left[\limsup_{t\rightarrow \infty} N(t-a)\right]  \exp\left\{ - \int_0^a \mu_{hs}(\eta) \,\ud\eta \right\} \ud a \\
 & \leq  \frac{\hat{N}_0(0) }{\displaystyle\int_0^{\infty} a \beta_h(a) \exp\left\{ -\int_0^a \mu_{hs}(\eta) \, \ud\eta \right\} \, \ud a} \displaystyle\int_0^{\infty} 
  \exp\left\{ - \int_0^a \mu_{hs}(\eta) \,\ud\eta \right\} \ud a 
  \end{aligned}
\end{equation*}
which, combined with \eqref{laplace-est}, establishes \eqref{2608164}. \qed

\section*{Acknowledgments}
We thank the American Institute of Mathematics for financial support through a SQuaRE program during 2014-2016.

\end{document}